\renewcommand{\section}[1]{\bigskip\noindent{\bfseries #1.}}
\renewcommand{\subsection}[1]{\bigskip\noindent{\bfseries #1.}}
\renewcommand{\subsubsection}[1]{\bigskip\noindent{\bfseries #1.}}
\renewcommand{\cref}{\Cref}
\newtheorem{theorem}{Theorem}
\newtheorem{lemma}[theorem]{Lemma}
\newtheorem{proposition}[theorem]{Proposition}
\newtheorem{definition}[theorem]{Definition}
\theoremstyle{nonumberplain}
\newtheorem{proof}{Proof}
\begin{document}

\title{On the dimension of subspaces with bounded Schmidt rank}

\author{Toby Cubitt}
\affiliation{Department of Mathematics, University of Bristol,
  University Walk, Bristol BS8 1TW, UK}

\author{Ashley Montanaro}
\affiliation{Department of Computer Science, University of Bristol,
  Woodland Road, Bristol, BS8 1UB, UK}

\author{Andreas Winter}
\affiliation{Department of Mathematics, University of Bristol,
  University Walk, Bristol BS8 1TW, UK}
\affiliation{Quantum Information Technology Lab, National University
  of Singapore, 2 Science Drive 3, Singapore 117542}

\date{30 May 2007}

\begin{abstract}
  We consider the question of how large a subspace of a given
  bipartite quantum system can be when the subspace contains only
  highly entangled states. This is motivated in part by results of
  Hayden \emph{et al.}, which show that in large $d\times
  d$--dimensional systems there exist random subspaces of dimension
  almost $d^2$, all of whose states have entropy of entanglement at
  least $\log d - O(1)$. It is also related to results due to
  Parthasarathy on the dimension of completely entangled subspaces,
  which have connections with the construction of unextendible product
  bases. Here we take as entanglement measure the Schmidt rank, and
  determine, for every pair of local dimensions $d_A$ and $d_B$, and
  every $r$, the largest dimension of a subspace consisting only of
  entangled states of Schmidt rank $r$ or larger. This exact answer is
  a significant improvement on the best bounds that can be obtained
  using random subspace techniques. We also determine the converse:
  the largest dimension of a subspace with an \emph{upper} bound on
  the Schmidt rank. Finally, we discuss the question of subspaces
  containing only states with Schmidt \emph{equal to} $r$.
\end{abstract}

\maketitle

\section{Introduction}
Entanglement is at the heart of quantum information theory, and this
property of quantum systems is ultimately responsible for new
information tasks such as teleportation~\cite{teleport}, quantum key
agreement~\cite{qkd1,qkd2} or quantum computational
speedup~\cite{Shor}.  Consequently, a theory of measuring and
comparing the entanglement content of quantum states has
emerged~\cite{entanglement-survey}, which attempts to classify states
according to their non-classical capabilities. It is, however,
remarkable how large a number of entanglement measures have been put
forward~\cite{entanglement-survey,Christandl}, indicating that the
structure of entanglement is not one that can be captured by a single
number. One particular measure is the Schmidt rank of a pure bipartite
state $\ket{\psi}$, i.e.\ the number of non-zero coefficients
$\lambda_i$ in the --- essentially unique --- Schmidt form:
\begin{equation*}
  \ket{\psi} = \sum_i \lambda_i {\ket{e_i}}_A {\ket{f_i}}_B.
\end{equation*}
This measure has even been extended to mixed states, as the maximum
Schmidt rank in an optimal pure state
decomposition~\cite{TerhalHorodecki}, but the convex hull construction
could also be considered. For pure states, the Schmidt rank is indeed
the unique invariant under the class of stochastic local operations
and classical communication (SLOCC).

Here we ask, and answer completely, the question: what is the maximum
dimension of a subspace $S$ in a $d_A \times d_B$ bipartite system
such that every state in $S$ has Schmidt rank at least $r$? This is
trivial for $r=1$, so we can assume $r\geq 2$; also, the Schmidt rank
can be at most $\min\{d_A,d_B\}$, which we will assume without loss of
generality to be $d_A$.

There are two extreme cases. The first, $r=2$ (i.e.\ a subspace that
contains no product state), is addressed in~\cite{Parthasarathy}; the
answer is $(d_A-1)(d_B-1)$. The other, $r = d_A = d_B =: d$, has an
elementary solution: the answer is $1$ (take any one-dimensional
subspace spanned by a vector of maximum Schmidt rank $d$). To show
this, consider any two-dimensional subspace spanned by unit vectors
$\ket{\varphi}, \ket{\psi} \in \CC^d \otimes \CC^d$. We want to show
that at least one superposition $\ket{\phi_x} = \ket{\varphi} + x
\ket{\psi}$ has Schmidt rank less than $d$. The crucial observation is
that we can arrange the coefficients of a state vector $\ket{\phi}$ in
the computational basis $\{ \ket{i}\ket{j} \}_{i,j=1,\ldots,d}$, into
a $d \times d$ matrix $M(\phi)$, and that the Schmidt rank of the
state vector equals the linear rank of the associated matrix. In other
words, the statement that $\ket{\phi_x}$ has Schmidt rank less than
$r$ is captured by the vanishing of the determinant $\det
M(\phi_x)$. But the latter is a non-constant polynomial in $x$ of
degree $d$. Hence, it must have a root in the complex field, and the
corresponding $\ket{\phi_x}$ has Schmidt rank $r-1$ or less.

It turns out that the generalisation to arbitrary $r$ rests on the
same matrix representation, and the characterisation of Schmidt rank
via vanishing of certain determinants again plays a crucial role. It
involves, however, much deeper algebraic geometry machinery, extending
the above use of the fundamental theorem of algebra.

\section{Notation and Terminology}
We will denote the Schmidt rank of a bipartite pure state
$\ket[AB]{\psi}$ (nonzero, but not necessarily normalised) by
$\schmidt(\ket[AB]{\psi})$. We say that a subspace $S$ of the
bipartite space in question has Schmidt rank $\geq r$ if all its
nonzero vectors have Schmidt rank $\geq r$ (analogously for $\leq r$
and $=r$). If $M$ is a matrix, $R$ is a set of indices for rows of
$M$, and $C$ is a set of indices for columns, then $M_{\{R,C\}}$
denotes the submatrix formed by deleting all rows and columns
\emph{other than} those in $R$ and $C$.

The projective space of dimension $d$ is denoted by $\PP^d$.  This is
the space of lines (one-dimensional subspaces) of $\CC^{d+1}$; i.e.\
it is obtained from the nonzero elements of $\CC^{d+1}$ by identifying
collinear vectors. If a subset $S \subseteq \CC^{d+1}$ is a union of
lines, this identification associates to it a natural projectification
of $S$, denoted by $\PP(S) \subseteq \PP^d$; see
e.g.~\cite{Griffiths,Harris,Shafarevich,CoxLittleOShea} for this and
related notions from algebraic geometry.

For a state vector $\ket{\psi} \in \CC^{d_A} \otimes \CC^{d_B}$, with
fixed local bases of the two Hilbert spaces, such that
$\ket{\psi}=\sum_{i,j} c_{ij}\ket{i}\ket{j}$, define the $d_A \times
d_B$ matrix $M(\psi) = (c_{ij})_{i=1,\ldots,d_A,j=1,\dots,d_B}$. This
identifies $\CC^{d_A} \otimes \CC^{d_B}$ with the space
$\mathbb{M}(d_A,d_B)$ of $d_A \times d_B$ matrices.

\section{Bounding the dimension of highly entangled subspaces}
We first state some preparatory lemmas relating bipartite states to
matrices, whose proofs are widely known and do not warrant repetition
here.

\begin{lemma}
  \label{lem:rank_r_states}
  The set of (unnormalised) states in $\CC^{d_A}\otimes\CC^{d_B}$ with
  Schmidt rank $r$ is isomorphic to the set of $d_A\times d_B$ complex
  matrices with rank $r$.
\end{lemma}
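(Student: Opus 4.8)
The plan is to leverage the linear identification $M : \CC^{d_A}\otimes\CC^{d_B} \to \mathbb{M}(d_A,d_B)$ already introduced, which sends $\ket{\psi} = \sum_{i,j} c_{ij}\ket{i}\ket{j}$ to the matrix $(c_{ij})$. By construction this map is linear and bijective, hence a vector space isomorphism between the full tensor product space and the full matrix space; what remains is to check that it carries the Schmidt-rank-$r$ states \emph{precisely} onto the rank-$r$ matrices, so that $M$ restricts to the asserted bijection between the two subsets.

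The single substantive step is the identity $\schmidt(\ket{\psi}) = \operatorname{rank} M(\psi)$, and the clean way to obtain it is through the correspondence between the Schmidt decomposition and the singular value decomposition (SVD). First I would observe that a Schmidt decomposition $\ket{\psi} = \sum_{k=1}^{s} \lambda_k \ket{e_k}_A \ket{f_k}_B$, with $\lambda_k > 0$ and the families $\{\ket{e_k}\}$, $\{\ket{f_k}\}$ orthonormal, translates under $M$ into $M(\psi) = \sum_{k=1}^{s} \lambda_k\, u_k v_k^\dagger$, where $u_k$ and $v_k$ are the coordinate vectors of $\ket{e_k}$ and $\ket{f_k}$ in the fixed local bases. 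This is exactly an SVD of $M(\psi)$: the Schmidt coefficients are the nonzero singular values and the local bases are the left and right singular vectors. Conversely, reading off the columns of the unitaries in any SVD of a matrix as state vectors produces a Schmidt decomposition of its $M$-preimage. Since both constructions are the same object viewed through $M$, the number of nonzero singular values and the number of nonzero Schmidt coefficients coincide.

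Finally I would close the argument by noting that the number of nonzero singular values equals the matrix rank, while the number of nonzero Schmidt coefficients is by definition the Schmidt rank; the orthonormality of $\{u_k\}$ and $\{v_k\}$ guarantees that the $s$ terms are linearly independent, so $s$ is simultaneously the minimal number of product terms and the matrix rank, confirming both notions of rank agree. I do not expect a genuine obstacle here — the result is classical and, as noted, does not really warrant repetition — the only point deserving a line of care is precisely this matching of the two minimality statements, which the SVD settles uniformly. Hence $M$ maps $\{\,\ket{\psi} : \schmidt(\ket{\psi}) = r\,\}$ bijectively onto $\{\,A \in \mathbb{M}(d_A,d_B) : \operatorname{rank} A = r\,\}$, as claimed.
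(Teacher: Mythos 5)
Your proof is correct and follows exactly the route the paper has in mind: the paper's proof is the one-line remark that the claim is ``obvious from the standard proof of the Schmidt decomposition via the singular value decomposition,'' and your argument simply spells out that same SVD correspondence under the coordinate map $M$. Nothing in your write-up deviates from or adds to the paper's intended reasoning beyond filling in the standard details.
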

\begin{proof}
  Obvious from the standard proof of the Schmidt decomposition via the
  singular value decomposition.
\end{proof}

\begin{lemma}
  \label{lem:rank_r_minors}
  A matrix $M$ has $\rank M < r$ iff all its order--$r$ minors (the
  determinants of $r\times r$ submatrices) are zero.
\end{lemma}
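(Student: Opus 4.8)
The plan is to establish the logically equivalent contrapositive, namely that $\rank M \geq r$ if and only if at least one order-$r$ minor of $M$ is nonzero; the lemma then follows by negating both sides. Both directions rest only on the coincidence of the rank with the column rank and the row rank, together with the characterisation of the determinant as a test for linear independence of rows (or columns).

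For one implication I would argue by contraposition. Suppose $\rank M < r$. Then the column space of $M$ has dimension at most $r-1$, so any choice of $r$ columns of $M$ is linearly dependent. Fix index sets $R$ and $C$ of size $r$ and consider the submatrix $M_{\{R,C\}}$. A nontrivial linear relation among the $r$ full columns indexed by $C$ restricts to the same relation among the corresponding columns of $M_{\{R,C\}}$, since deleting rows preserves any linear dependence; hence these $r$ columns are dependent and $\det M_{\{R,C\}} = 0$. As $R$ and $C$ were arbitrary, every order-$r$ minor vanishes.

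For the converse I would again contrapose. Suppose $\rank M \geq r$. Then $M$ has $r$ linearly independent columns; let $C$ index them and form the $d_A \times r$ submatrix $M_{\{\{1,\dots,d_A\}, C\}}$, which has column rank $r$ and therefore row rank $r$. Selecting $r$ linearly independent rows, indexed by a set $R$ of size $r$, yields the square submatrix $M_{\{R,C\}}$ whose rows are independent, hence of full rank $r$ and with nonzero determinant. This exhibits a nonvanishing order-$r$ minor, completing the equivalence.

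I expect no genuine obstacle, as the statement is classical linear algebra; the only points requiring a little care are the bookkeeping involved in passing between columns of $M$ and columns of a submatrix, and the use of the equality of row rank and column rank to convert a block of $r$ independent columns into a square $r \times r$ block of full rank.
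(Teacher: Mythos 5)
Your proof is correct. Note that the paper does not actually prove this lemma at all --- its ``proof'' is a citation to Horn and Johnson, \emph{Matrix Analysis}, p.~13 --- so your self-contained argument supplies exactly what the paper delegates to a reference, and it is the standard textbook argument one would find there. Both directions are sound: in the forward direction you correctly observe that a nontrivial linear relation among $r$ full columns of $M$ survives the deletion of rows, forcing every $r\times r$ submatrix built on those columns (indeed on any $r$ columns) to be singular; in the converse direction you correctly use the equality of row rank and column rank to pass from $r$ independent columns to an invertible $r\times r$ block. The one point needing care --- that the $d_A\times r$ submatrix formed by the independent columns, having row rank $r$, admits a choice of $r$ of its own rows that are linearly independent (a spanning set of rows contains a basis) --- is handled correctly, and the resulting square block then has nonzero determinant. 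In substance nothing is gained or lost relative to the paper except self-containedness, which for a result this classical is a reasonable trade either way.
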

\begin{proof}
  See \cite[p.13]{Horn+Johnson}.
\end{proof}

This means that a geometric characterisation of subspaces of Schmidt
rank $\geq r$ is to say that the linear space $M(S)$ of associated
matrices doesn't intersect the set of common zeros of all order--$r$
minors (except in the zero vector). Such common zeroes of sets of
multivariate polynomials are called (algebraic) varieties, and the one
in question has been studied in the mathematical
literature~\footnote{Determinantal varieties have also been studied in
  quantum information theory by Chen~\cite{Chen}, in a different
  context.}.

\begin{definition}[Determinantal variety]
  \label{def:determinantal_variety}
  The affine determinantal variety $\mathcal{D}_r(d_A,d_B)$ over the
  (algebraically closed) field $\FF$ in the space $\FF^{d_Ad_B}$ is
  the variety defined by the vanishing of all order--$r$ minors of a
  $d_A\times d_B$ matrix, whose elements are considered as independent
  variables in $\FF$.
\end{definition}
(Of course, in quantum theory we are mostly interested in the case
$\FF=\CC$.)

The basic idea is now essentially parameter counting: if the dimension
of $S$ plus that of the variety $\mathcal{D}_r(d_A,d_B)$ is larger
than $d_A d_B$, then the polynomial equations defining the order--$r$
minors have roots in $M(S)$. To make this heuristic rigorous, we need
to go to the corresponding projective spaces: since the polynomials
defined by the minors of a matrix are homogeneous, a determinantal
variety can also be thought of as a projective variety
$\PP(\mathcal{D}_r(d_A,d_B))$ in the space $\PP^{d_Ad_B-1}$. The same
is true for the subspace $S$, so it also has a projectification
$\PP(S)$.

\begin{lemma}[Dimension of determinantal varieties]
  \label{lem:determinantal_variety_dimension}
  The dimension of an affine determinantal variety is given by $\dim
  \mathcal{D}_r(d_A,d_B) = d_Ad_B - (d_A - r + 1)(d_B - r + 1)$.
\end{lemma}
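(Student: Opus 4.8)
First I would note that, by \cref{lem:rank_r_minors}, the variety $\mathcal{D}_r(d_A,d_B)$ is precisely the set of $d_A\times d_B$ matrices of rank at most $r-1$, and that the claimed formula simplifies (writing $s:=r-1$) to the symmetric expression $\dim\mathcal{D}_r = s(d_A+d_B-s)$, since $d_Ad_B-(d_A-s)(d_B-s)=sd_A+sd_B-s^2$. So the task is to show that the variety of matrices of rank $\le s$ has dimension $s(d_A+d_B-s)$. My plan is to exhibit $\mathcal{D}_r$ as the image of an explicit parametrization, which yields irreducibility, and then to compute the dimension either via the theorem on the dimension of the fibres of a morphism or via an explicit local coordinate chart; the two approaches give the same answer and cross-check one another.

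For the parametrization, I would consider the multiplication map $\Phi\colon \FF^{d_A\times s}\times\FF^{s\times d_B}\to\mathbb{M}(d_A,d_B)$, $(A,B)\mapsto AB$. Its image is exactly the set of matrices of rank at most $s$, so $\mathcal{D}_r=\operatorname{im}\Phi$. Since the domain is irreducible (a product of affine spaces) and $\Phi$ is a morphism, $\mathcal{D}_r$ is irreducible and thus has a well-defined dimension. Over a matrix $M$ of rank exactly $s$, the fibre $\Phi^{-1}(M)$ consists of the factorizations $M=AB$ with $A$ of full column rank and $B$ of full row rank; any two such are related by $A\mapsto AG$, $B\mapsto G^{-1}B$ for $G\in\mathrm{GL}_s(\FF)$, so the generic fibre is a single $\mathrm{GL}_s$-orbit of dimension $s^2$. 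The theorem on fibre dimension then gives $\dim\mathcal{D}_r=\dim\bigl(\FF^{d_A\times s}\times\FF^{s\times d_B}\bigr)-s^2=s(d_A+d_B)-s^2$, as required.

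To pin down the dimension by hand, and to confirm that the generic fibre really has dimension $s^2$, I would work in the open chart where the leading $s\times s$ block is invertible. Writing $M=\left(\begin{smallmatrix} P & Q \\ R & T\end{smallmatrix}\right)$ with $P$ an invertible $s\times s$ block, the rank is already at least $s$, so the condition $\rank M\le s$ is equivalent to the vanishing of the Schur complement, $T=RP^{-1}Q$. Thus, on this chart, $\mathcal{D}_r$ is the graph of a regular map expressing the $(d_A-s)\times(d_B-s)$ block $T$ in terms of the free entries $P,Q,R$; counting these entries gives $s^2+s(d_B-s)+(d_A-s)s=s(d_A+d_B-s)$, matching the parametrization count and simultaneously exhibiting the rank-exactly-$s$ locus as a smooth variety of the claimed dimension.

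The main obstacle is the passage from ``there is a dense open subset of the expected dimension'' to ``this is the dimension of $\mathcal{D}_r$.'' The subtlety is that $\mathcal{D}_r$ also contains the lower strata of matrices of rank $<s$; I must verify that the rank-exactly-$s$ locus is dense in $\mathcal{D}_r$ (which follows from the irreducibility established via $\Phi$, together with this locus being nonempty and open) so that its dimension controls the whole variety, and that the smaller-rank strata are genuinely of lower dimension (which follows inductively from the same formula). Making the fibre-dimension step fully rigorous --- identifying the generic fibre with a single $\mathrm{GL}_s$-orbit and invoking upper-semicontinuity of fibre dimension --- is where the real algebraic-geometry content lies; since this is a classical determinantal-variety computation, one could alternatively simply appeal to the standard references, e.g.~\cite{Harris}.
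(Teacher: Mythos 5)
Your proof is correct, but it is genuinely different from what the paper does: the paper offers no argument at all for this lemma, simply citing Harris (Proposition~12.2, p.~151), where the result is proved via the incidence correspondence $\{(M,\Lambda) : \Lambda \subseteq \ker M\}$ fibred over a Grassmannian. Your route --- the rank factorization $\Phi(A,B)=AB$ with $A \in \FF^{d_A\times s}$, $B \in \FF^{s\times d_B}$, identification of the generic fibre with a $\mathrm{GL}_s$-torsor, and the cross-check via the Schur-complement chart --- is a well-known alternative, and you handle the one real subtlety correctly: irreducibility (from surjectivity of $\Phi$ onto the closed set $\mathcal{D}_r$) is what licenses reading off the dimension from the dense open rank-exactly-$s$ locus. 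What each approach buys: the paper's citation is the economical choice for a Letter, since the statement is classical and the cited proof also establishes irreducibility and further structure; your argument is self-contained modulo the fibre-dimension theorem, and in fact your second computation (the chart where the leading $s\times s$ block $P$ is invertible, on which $\mathcal{D}_r$ is the graph $T=RP^{-1}Q$) can stand alone once irreducibility is known, giving a proof that needs essentially no machinery beyond the fact that a nonempty open subset of an irreducible variety is dense and of full dimension. One small point of rigour: what you invoke as ``upper-semicontinuity of fibre dimension'' is more precisely the theorem that a dominant morphism of irreducible varieties has an open dense subset of the target over which every fibre has dimension $\dim X - \dim Y$; since the rank-exactly-$s$ locus is open dense and all its fibres have dimension $s^2$, that open set must meet it, which closes the argument exactly as you intend.
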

\begin{proof}
  See e.g.~\cite[Proposition~12.2, p.~151]{Harris}.
\end{proof}
The corresponding projective determinantal variety has, of course,
dimension one less: $\dim \PP(\mathcal{D}_r(d_A,d_B)) = d_Ad_B - (d_A
- r + 1)(d_B - r + 1) - 1$. Likewise, the dimension of $\PP(S)$ is
$\dim S -1$.

Now, for projective varieties, the parameter counting argument always
holds:
\begin{lemma}[Intersection of projective varieties]
  \label{lem:intersection}
  If $V$ and $W$ are projective varieties in $\PP^d$ such that $\dim V
  + \dim W \geq d$, then $V\cap W\neq\emptyset$.
\end{lemma}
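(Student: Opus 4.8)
The plan is to pass from projective space to affine space via the cone construction and then invoke the affine dimension theorem, whose proof rests ultimately on Krull's principal ideal theorem. To each projective variety $X\subseteq\PP^d$ associate its affine cone $C(X)\subseteq\CC^{d+1}$, namely the union of all lines through the origin whose images in $\PP^d$ lie in $X$, together with the point $0$ itself. Since $X$ is cut out by homogeneous polynomials, $C(X)$ is a genuine affine variety, and a standard fact (see e.g.~\cite{Harris,Shafarevich}) is that $\dim C(X)=\dim X+1$. The decisive feature of this construction --- and the reason the statement holds in projective but not in affine space --- is that \emph{every} cone contains the origin.

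First I would form the cones $C(V)$ and $C(W)$ in $\CC^{d+1}$. Both contain $0$, so their intersection is automatically nonempty, and in particular the origin lies on some irreducible component $Z$ of $C(V)\cap C(W)$. Next I would apply the affine dimension theorem: for affine varieties $Y,Y'$ in $\CC^n$, every irreducible component of $Y\cap Y'$ has dimension at least $\dim Y+\dim Y'-n$. Taking $n=d+1$ and using the hypothesis $\dim V+\dim W\geq d$, one obtains
\begin{equation*}
  \dim Z \geq \dim C(V)+\dim C(W)-(d+1) = (\dim V+\dim W+2)-(d+1) \geq 1.
\end{equation*}

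Finally, since the component $Z$ passes through the origin and has dimension at least $1$, it must contain points of $\CC^{d+1}$ other than $0$. Any such nonzero point lies in both cones, hence its image in $\PP^d$ lies in both $V$ and $W$, proving $V\cap W\neq\emptyset$. The main obstacle is the affine dimension theorem itself, which is the substantive input and which I would treat as a cited black box (\cite{Harris,Shafarevich}) rather than reprove; the projective statement is then just this affine lower bound on the dimension of components, packaged with the elementary but crucial observation that two cones can never be ``parallel'' because they are forced to share the origin.
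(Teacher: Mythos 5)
Your argument is sound and is, in essence, the proof the paper declines to give: the paper's own ``proof'' of this lemma is a bare citation to Shafarevich (Theorem~6, p.~76) and Harris (Exercise~11.38, p.~148), and the argument in those references is precisely your cone construction --- replace $V$ and $W$ by their affine cones $C(V),C(W)\subseteq\CC^{d+1}$, observe that cones always share the origin, apply the affine dimension theorem to get a component through the origin of dimension at least $(\dim V+1)+(\dim W+1)-(d+1)\geq 1$, and send any nonzero point of that component down to $\PP^d$. Treating the affine dimension theorem (Krull's principal ideal theorem in geometric form) as a cited black box is a reasonable stopping point; it is the genuinely substantive input, and your write-up correctly isolates the one projective-specific idea, namely that two cones can never be disjoint.

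One point needs tightening. The affine dimension theorem is a statement about \emph{irreducible} varieties, and in the form you quote --- for arbitrary affine varieties $Y,Y'$ in $\CC^n$, every irreducible component of $Y\cap Y'$ has dimension at least $\dim Y+\dim Y'-n$ --- it is false for reducible ones: if $Y\subseteq\CC^3$ is the union of a plane and a line not contained in it, and $Y'$ is a generic plane, then $\dim Y+\dim Y'-3=1$, yet $Y\cap Y'$ has a zero-dimensional component where the line meets $Y'$. Since the varieties in this paper (common zero sets of collections of minors) need not be irreducible, and since \emph{every} irreducible component of a cone passes through the origin, your chosen component $Z$ could a priori arise from low-dimensional components of $C(V)$ and $C(W)$ and violate the claimed inequality. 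The fix costs one sentence: first replace $V$ and $W$ by irreducible components $V_0\subseteq V$ and $W_0\subseteq W$ of maximal dimension; the cones $C(V_0)$ and $C(W_0)$ are then irreducible affine varieties that both contain the origin, the dimension theorem applies to them verbatim and yields a nonzero point of $C(V_0)\cap C(W_0)\subseteq C(V)\cap C(W)$, and the rest of your argument goes through unchanged.
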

\begin{proof}
  See e.g.~\cite[Theorem~6, p.~76]{Shafarevich}
  or~\cite[Exercise~11.38, p.~148]{Harris}.
\end{proof}

\begin{proposition}
  \label{prop:dimension_upper_bound}
  For any subspace $S\subseteq\CC^{d_A}\otimes\CC^{d_B}$ of dimension
  $\dim S > (d_A-r+1)(d_B-r+1)$, there exists at least one state in
  the subspace with Schmidt rank strictly less than $r$.
\end{proposition}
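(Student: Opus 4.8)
The plan is to translate the claim entirely into the projective setting and then invoke \cref{lem:intersection}. Under the linear isomorphism $M$ of \cref{lem:rank_r_states}, the subspace $S$ corresponds to a linear subspace $M(S) \subseteq \mathbb{M}(d_A,d_B) \cong \FF^{d_A d_B}$ of the same dimension $\dim S$. Being cut out by linear (hence homogeneous) equations, $M(S)$ is itself a projective variety whose projectification $\PP(S)$ has dimension $\dim S - 1$. On the other side, by \cref{lem:rank_r_minors} the determinantal variety $\mathcal{D}_r(d_A,d_B)$ is precisely the locus of matrices of rank strictly less than $r$, and by \cref{lem:determinantal_variety_dimension} its projectification has dimension $d_A d_B - (d_A - r + 1)(d_B - r + 1) - 1$.

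First I would check the dimension inequality demanded by \cref{lem:intersection}, whose ambient space here is $\PP^{d_A d_B - 1}$. Summing the two projective dimensions gives
\begin{equation*}
  (\dim S - 1) + \bigl(d_A d_B - (d_A - r + 1)(d_B - r + 1) - 1\bigr),
\end{equation*}
and a one-line rearrangement shows this is $\geq d_A d_B - 1$ exactly when $\dim S > (d_A - r + 1)(d_B - r + 1)$, which is the hypothesis. Thus \cref{lem:intersection} applies and produces a point in $\PP(S) \cap \PP(\mathcal{D}_r(d_A,d_B))$.

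It then remains to unwind the projective conclusion back into a statement about states. A point of the intersection is a line through the origin, represented by a nonzero matrix $M \in M(S)$ lying on the determinantal variety, hence with $\rank M < r$. Pulling this back through \cref{lem:rank_r_states}, the matrix is $M(\psi)$ for some nonzero $\ket{\psi} \in S$ satisfying $\schmidt(\ket{\psi}) = \rank M < r$, which is exactly the claimed state.

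I expect the only genuine subtlety to be bookkeeping: confirming that the intersection lemma is applied with the correct ambient dimension $d_A d_B - 1$, and that the \emph{strict} inequality on $\dim S$ is precisely what converts into the \emph{non-strict} inequality on the sum of projective dimensions required by \cref{lem:intersection}. All the substantive content — the dimension of the determinantal variety and the guaranteed nonempty intersection of projective varieties — is delivered by the cited algebraic-geometry results, so no further heavy machinery is needed beyond this translation.
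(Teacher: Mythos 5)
Your proposal is correct and follows essentially the same route as the paper's own proof: identify states with matrices via \cref{lem:rank_r_states}, pass to the projectifications $\PP(S)$ and $\PP(\mathcal{D}_r(d_A,d_B))$, verify via \cref{lem:determinantal_variety_dimension} that the projective dimensions sum to at least $d_Ad_B-1$, and conclude by \cref{lem:intersection}. Your explicit unwinding of the intersection point back to a nonzero state, and the remark that integrality turns the strict hypothesis into the non-strict inequality needed by the intersection lemma, are careful touches the paper leaves implicit, but the argument is the same.
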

\begin{proof}
  The set of all (unnormalised) states in the bipartite space
  $\CC^{d_A}\otimes\CC^{d_B}$ forms a projective space
  $\PP^{d_Ad_B-1}$ over the complex field. From
  \cref{lem:rank_r_states,lem:rank_r_minors,def:determinantal_variety},
  the subset of those states with Schmidt rank less than $r$ then
  forms a projective determinantal variety
  $\PP(\mathcal{D}_r(d_A,d_B))$ in that space. The subspace $S$
  corresponds to the projective variety $\PP(S)$ (a projective linear
  subspace), which has dimension $\dim\PP(S) > (d_A-r+1)(d_B-r+1) - 1$
  by assumption.

  Making use of \cref{lem:determinantal_variety_dimension}, we have
  \begin{equation*}
    \begin{split}
    \dim\PP(\mathcal{D}_r(d_A,d_B)) + \dim\PP(S)
       &\geq d_Ad_B - 1\\
       &= \dim\PP^{d_Ad_B-1}.
     \end{split}
  \end{equation*}
  Thus by \cref{lem:intersection}, $\PP(S)$ and
  $\PP(\mathcal{D}_r(d_A,d_B))$ have a non-empty intersection, i.e.\
  the subspace $S$ contains at least one state with Schmidt rank less
  than $r$.
\end{proof}

\section{Construction of highly entangled subspaces}
We will now give an explicit construction of a subspace with bounded
Schmidt rank that saturates the bound of
\cref{prop:dimension_upper_bound}, based on totally non-singular
matrices. (Note that we can not simply take the complement of
$\PP(\mathcal{D}_r(d_A,d_B))$ in $\PP^{d_Ad_B-1}$, since it is by no
means clear that this is a projective linear variety, i.e.\ a
subspace.)

\begin{definition}[Totally non-singular matrix]
  A matrix is said to be totally non-singular if all of its minors are
  non-zero.
\end{definition}

\begin{lemma}
  There exist totally non-singular matrices of any dimension.
\end{lemma}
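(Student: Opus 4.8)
The plan is to show that a \emph{generic} matrix is totally non-singular, so that such matrices not only exist but form a dense (full-measure) set; this is both the cleanest route and the one most in keeping with the algebraic-geometry viewpoint already in play. The single idea on which everything rests is that each minor, regarded as a function of the matrix entries $c_{ij}$, is a polynomial that does not vanish identically. Concretely, fix an order $m \leq \min\{d_A,d_B\}$ together with row indices $R=\{r_1,\dots,r_m\}$ and column indices $C=\{s_1,\dots,s_m\}$, and treat the $c_{ij}$ as independent variables over $\FF=\CC$. Then $\det M_{\{R,C\}}$ is a polynomial $p_{R,C}$ whose Leibniz expansion is a signed sum, over permutations $\sigma$ of $\{1,\dots,m\}$, of the monomials $\prod_{l=1}^{m} c_{r_l\, s_{\sigma(l)}}$. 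Distinct permutations yield distinct monomials (each variable $c_{r_l\,\cdot}$ records the column assigned to row $r_l$, hence the whole product determines $\sigma$), so no cancellation occurs; in particular the diagonal term $\sigma=\mathrm{id}$ survives with coefficient $+1$, and therefore $p_{R,C}$ is not the zero polynomial.

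With the nonvanishing of each $p_{R,C}$ in hand, I would finish as follows. There are only finitely many pairs $(R,C)$, hence finitely many polynomials $p_{R,C}$, and the zero set of each is a proper algebraic subvariety of the entry space $\CC^{d_A d_B}$. Because $\CC$ is infinite, such a subvariety is a proper, measure-zero, nowhere-dense subset, and a finite union of these cannot exhaust $\CC^{d_A d_B}$. Equivalently, the product $P=\prod_{R,C} p_{R,C}$ is again a nonzero polynomial, so it does not vanish identically over the infinite field $\CC$; any matrix at which $P\neq 0$ has all its minors simultaneously nonzero and is thus totally non-singular. There is no genuine obstacle beyond the elementary nonvanishing observation of the first paragraph; the only point requiring a moment's care is that the argument needs an infinite field, which is automatic here since $\FF=\CC$.

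If one prefers an \emph{explicit} witness to a genericity statement, I would instead exhibit a Cauchy matrix. Take $M_{ij}=1/(x_i-y_j)$ with the $x_i$ and $y_j$ chosen pairwise distinct and disjoint, for instance $x_i=i$ and $y_j=j+\tfrac12$, so that $x_i\neq y_j$ for all $i,j$. Every minor of $M$ is itself the determinant of a Cauchy matrix on sub-collections of the $x$'s and $y$'s, and Cauchy's classical formula evaluates it as $\prod_{i<j}(x_i-x_j)(y_j-y_i)/\prod_{i,j}(x_i-y_j)$, a ratio of products of pairwise differences that is manifestly nonzero under the stated choices. Either route establishes the existence of totally non-singular $d_A\times d_B$ matrices for all dimensions.
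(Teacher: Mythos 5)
Your proof is correct; in fact both of your routes appear in the paper's own proof, but with the emphasis reversed. The paper's primary argument is an explicit witness --- the Vandermonde matrix generated by $0<\lambda_1<\dots<\lambda_n$, whose total non-singularity follows from total positivity, quoted from the literature --- and it relegates genericity to a single sentence (``the vanishing of a minor defines a set of matrices of measure $0$''). You do the opposite: your main argument is the genericity one, and you actually prove it rather than assert it, showing via the Leibniz expansion that each minor is a nonzero polynomial in the entries (distinct permutations give distinct monomials, so no cancellation), and then that the zero sets of the finitely many minors --- equivalently the zero set of their product --- cannot exhaust $\CC^{d_A d_B}$. Your explicit witness also differs: a Cauchy matrix $M_{ij}=1/(x_i-y_j)$, all of whose minors are again Cauchy determinants evaluated in closed form by Cauchy's formula and hence manifestly nonzero under your choice $x_i=i$, $y_j=j+\tfrac12$, instead of a Vandermonde matrix. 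What your version buys is self-containedness (no appeal to an external total-positivity result) and a rigorous justification of the genericity claim, which moreover works directly for rectangular matrices; what the paper's version buys is brevity, since the key facts are simply cited.
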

\begin{proof}
  The $n\times n$ Vandermonde matrix generated by $0 < \lambda_1 <
  \lambda_2 < \dots < \lambda_n$ is totally positive (i.e.\ all its
  minors are strictly positive, see~\cite{Fallat}), therefore is also
  totally non-singular. Alternatively, it is also clear that a generic
  complex matrix will be totally non-singular, as the vanishing of a
  minor defines a set of matrices of measure $0$.
\end{proof}

\begin{lemma}
  \label{lem:vector_zero_elements}
  Let $M$ be an $m\times m$ totally non-singular matrix, with $m \geq
  n$. Let $v$ be any linear combination of $n$ of the columns of
  $M$. Then $v$ contains at most $n-1$ zero elements.
\end{lemma}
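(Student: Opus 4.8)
The plan is to argue by contradiction, translating the vanishing of entries of $v$ into a homogeneous linear system whose coefficient matrix is a square submatrix of $M$. Write the chosen columns as an index set $C$ with $|C| = n$, so that the $i$th entry of $v$ is $v_i = \sum_{j\in C} \alpha_j M_{ij}$ for coefficients $(\alpha_j)_{j\in C}$ that are not all zero (a genuine combination of $n$ columns). Suppose, for contradiction, that $v$ has $n$ or more zero entries, and let $R$ be an index set of $n$ rows on which $v$ vanishes, so that $|R| = n$.

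For each $i \in R$ the condition $v_i = 0$ reads $\sum_{j \in C} \alpha_j M_{ij} = 0$. Collecting these over all $i \in R$ yields a homogeneous system of $n$ equations in the $n$ unknowns $(\alpha_j)_{j\in C}$, whose coefficient matrix is precisely the $n\times n$ submatrix $M_{\{R,C\}}$. First I would observe that $\det M_{\{R,C\}}$ is an order--$n$ minor of $M$, hence nonzero since $M$ is totally non-singular. The matrix $M_{\{R,C\}}$ is therefore invertible, so the only solution of the system is the trivial one, namely $\alpha_j = 0$ for all $j\in C$.

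This contradicts the assumption that the $\alpha_j$ are not all zero, and so $v$ can vanish on at most $n-1$ rows, as claimed. The argument is short once the right reformulation is in place; the only points requiring care are the identification of the relevant determinant as an honest order--$n$ minor of $M$, so that total non-singularity applies, and the accompanying convention that ``linear combination of $n$ columns'' is understood to be nontrivial, since the zero vector vanishes everywhere and would otherwise be a spurious counterexample.
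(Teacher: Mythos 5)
Your proof is correct and follows essentially the same route as the paper: both identify the $n\times n$ submatrix $M_{\{R,C\}}$ from the $n$ vanishing rows and $n$ chosen columns, and derive a contradiction between the nontrivial linear dependence of its columns and the non-vanishing of the minor $\det M_{\{R,C\}}$ guaranteed by total non-singularity. Your explicit remark that the combination must be nontrivial is a point the paper leaves implicit, but the argument is otherwise identical.
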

\begin{proof}
  Assume for contradiction that there exists a linear combination of
  $n$ columns of $M$ containing $n$ or more zero elements. Let $R$ be
  the set of indices of $n$ of those zero elements and $C$ be the set
  of indices of the $n$ columns. Since there is a linear combination
  of the columns of $M$ such that the elements indexed by $R$ are all
  zero, the columns of the submatrix $M_{\{R,C\}}$ are linearly
  dependent, thus the minor $\det M_{\{R,C\}}$ is zero and we have a
  contradiction.
\end{proof}

The construction of the subspace is based on the sets of vectors
introduced in \cref{lem:vector_zero_elements}.
\begin{proposition}
  \label{prop:dimension_lower_bound}
  Every bipartite system $\CC^{d_A}\otimes\CC^{d_B}$ has a subspace
  $S$ of Schmidt rank $\geq r$, and of dimension $\dim S =
  (d_A-r+1)(d_B-r+1)$.
\end{proposition}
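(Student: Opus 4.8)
The plan is to reduce everything to matrix rank and then write down an explicit family of matrices, parametrised by a $p\times q$ grid of coefficients with $p := d_A-r+1$ and $q := d_B-r+1$, all of whose nonzero members have rank at least $r$. By \cref{lem:rank_r_states} a subspace of Schmidt rank $\geq r$ is the same thing as a linear space of $d_A\times d_B$ matrices whose nonzero elements have rank $\geq r$, so it suffices to produce the latter. Since $pq=(d_A-r+1)(d_B-r+1)$ is exactly the bound of \cref{prop:dimension_upper_bound}, such a family is automatically of maximal dimension, so no separate optimality argument is required; recall also $d_A\leq d_B$.

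First I would fix a totally non-singular matrix $N$ (guaranteed to exist by the preceding lemma) and build the members of $S$ from it by a ``sliding window'' (convolution) pattern, so that the $pq$ free coefficients are spread across the whole matrix rather than confined to a $p\times q$ block --- a block of free entries would only yield rank $\leq\min(p,q)<r$. The motivating special case is $r=d_A$, where $p=1$: there the construction is simply the banded Toeplitz matrix implementing multiplication by a nonzero polynomial of degree $d_B-d_A=q-1$, which is injective on polynomials of degree $<d_A$ and hence has full row rank $d_A=r$. The role of $N$ is to generalise this to $p>1$: superimposing $p$ shifted ``layers'' while using the columns of a totally non-singular matrix as carriers prevents the layers from cancelling, which is what allows the rank to exceed $\max(p,q)$. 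Injectivity of the parametrisation $C\mapsto M(C)$ is then a routine check, giving $\dim S=pq$.

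The crux is to show that every nonzero $M(C)$ has rank $\geq r$, and this is where \cref{lem:vector_zero_elements} does the work. I would argue by contradiction: if some nonzero $M(C)$ had rank $\leq r-1$, then by \cref{lem:rank_r_minors} all its order--$r$ minors vanish and, equivalently, its kernel has dimension $\geq d_B-r+1=q$, furnishing that many independent linear relations among its columns. The construction is engineered so that any such relation forces a nontrivial combination of some $n$ columns of $N$ to have $n$ or more vanishing coordinates --- precisely the configuration \cref{lem:vector_zero_elements} forbids --- which pins down $C=0$. The main obstacle, and the part demanding genuine care, is the bookkeeping that converts the purely linear-algebraic rank deficiency into this combinatorial statement about zero entries: one must choose the window/shift pattern so that a vanishing minor is faithfully translated into a forbidden zero-pattern, and verify that the index ranges work for every $r$ between $2$ and $d_A$ (the extremes $r=2$ and $r=d_A$ serving as sanity checks).
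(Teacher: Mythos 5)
You have assembled the right ingredients---the reduction to matrices via \cref{lem:rank_r_states,lem:rank_r_minors}, totally non-singular matrices, and \cref{lem:vector_zero_elements}---but the proof itself is missing. You never actually define the ``sliding window'' parametrisation $C \mapsto M(C)$, and you explicitly defer the verification that every nonzero $M(C)$ has rank $\geq r$, calling it bookkeeping that ``demands genuine care.'' That deferred step is not bookkeeping; it is the entire content of the proposition. Your proposed route to it (if the rank were $\leq r-1$, the kernel would have dimension $\geq q$, and the relations it furnishes are somehow ``engineered'' to force a combination of $n$ columns of $N$ with $n$ zeros) is exactly the part for which no argument is given, and it is not clear that such an argument exists for any construction resembling the one you gesture at: converting a $q$-dimensional kernel into the specific zero-pattern forbidden by \cref{lem:vector_zero_elements} is the hard direction, and nothing in the proposal does it.

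The idea that closes this gap in the paper is a structural choice your sketch circles around but never lands on: place the vectors furnished by \cref{lem:vector_zero_elements} along the \emph{diagonals} of the matrix, each diagonal of length $\abs{k}\geq r$ contributing $\abs{k}-r+1$ basis matrices, and then, for an arbitrary nonzero linear combination, look at the \emph{top-rightmost} diagonal $\kappa$ carrying any non-zero entries. By \cref{lem:vector_zero_elements} that diagonal has at least $r$ non-zero entries, and since everything above it vanishes, the $r\times r$ submatrix having those entries on its main diagonal is lower-triangular with non-zero diagonal, hence has non-vanishing determinant; rank $\geq r$ follows at once from \cref{lem:rank_r_minors}, with no contradiction argument and no kernel analysis at all. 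The count $\sum_k(\abs{k}-r+1)=(d_A-r+1)(d_B-r+1)$ then finishes the proof. Your Toeplitz special case $r=d_A$ is consistent with this picture (each full-length diagonal contributes one vector), but the passage from $p=1$ to $p>1$ via ``shifted layers with columns of $N$ as carriers'' is precisely where a concrete definition and a concrete non-vanishing minor are required, and neither is supplied. A minor side point: your parenthetical claim that a $p\times q$ block of free entries fails because its elements have rank $\leq\min(p,q)<r$ is wrong in general (for $r=2$ and $d_A\geq 3$ one has $\min(p,q)=d_A-1\geq r$); the real reason a block fails is that it contains rank-one matrices.
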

\begin{proof}
  Since the bipartite states with Schmidt rank bounded by $r$ are
  isomorphic to $d_A\times d_B$ matrices whose rank is at least $r$
  (\cref{lem:rank_r_states}), and a matrix has rank greater than or
  equal to $r$ iff at least one of its order--$r$ minors is non-zero
  (\cref{lem:rank_r_minors}), it is sufficient to construct a set of
  linearly independent matrices $S$ of cardinality $\abs{S} =
  (d_A-r+1)(d_B-r+1)$ such that any linear combination of them has
  \emph{at least one} non-zero order--$r$ minor, since these then
  define a basis for a subspace with the desired properties.
  
  Label the diagonals of a $d_A\times d_B$ matrix by integers $k$,
  with $k$ increasing from lower-left to upper-right, and denote the
  length of the $k^{\mathrm{th}}$ diagonal by $\abs{k}$. From
  \cref{lem:vector_zero_elements}, there exist sets of $t=\abs{k}-r+1$
  linearly independent vectors of length $\abs{k}$ such that any
  linear combination of them has at most $t-1$ zero elements, or
  conversely, has at least $\abs{k}-(t-1)=r$ non-zero elements.

  For each diagonal with length $\abs{k} \geq r$, construct a set of
  linearly independent matrices $S_k$ of cardinality
  $\abs{S_k}=\abs{k}-r+1$ by putting these vectors down the
  $k^{\mathrm{th}}$ diagonal. By construction, any linear combination
  of these will have at least $r$ non-zero elements down that
  diagonal. Since the determinant of the $r\times r$ submatrix with
  these $r$ non-zero elements down its main diagonal is clearly
  non-zero, any linear combination of matrices in $S_k$ has at least
  one non-zero order--$r$ minor, thus has rank at least $r$.
  
  Now define the set $S = \bigcup_k S_k$. Since matrices from
  different $S_k$ have elements down different diagonals, the matrices
  in $S$ are linearly independent. It remains to show that any linear
  combination of matrices from \emph{different} $S_k$ still has rank
  at least $r$. Let $M$ be a matrix given by some linear combination
  of matrices in $S$, and let $\kappa$ be the maximum $k$ for which
  the linear combination includes matrices from $S_k$. It is still
  true that the $\kappa^{\mathrm{th}}$ diagonal of $M$ must contain at
  least $r$ non-zero elements. As $\kappa$ labels the top-rightmost
  diagonal of $M$ that contains any non-zero elements, the $r\times r$
  submatrix of $M$ with those $r$ non-zero elements down its main
  diagonal is lower-triangular, so has non-zero determinant. Thus $M$
  has at least one non-zero order--$r$ minor, so has rank at least
  $r$.
  
  Assume for convenience that $d_B \geq d_A$. To determine the
  cardinality of $S$, i.e.\ the dimension of the subspace, note that a
  $d_A\times d_B$ matrix has $1+d_B-d_A$ diagonals of length $d_A$,
  and $2$ diagonals of each length less than $d_A$. Then the
  cardinality of $S$ is given by
  \begin{align*}
    \Abs{S}
      &= \sum_k\Abs{S_k} = \sum_k(\abs{k}-r+1)\\
      &= \bigl(1+d_B-d_A\bigr)\bigl(d_A-r+1\bigr) + 2\sum_{i=r}^{d_A-1}(i-r+1)\\
      &= (d_A-r+1)(d_B-r+1),
  \end{align*}
  which matches the claimed dimension of the subspace.
\end{proof}

\section{Subspaces with bounded Schmidt rank}
Putting together
\cref{prop:dimension_upper_bound,prop:dimension_lower_bound}, we
obtain our main result:
\begin{theorem}
  The maximum dimension of a subspace $S \subseteq \CC^{d_A}\otimes
  \CC^{d_B}$ of Schmidt rank $\geq r$ is given by
  $(d_A-r+1)(d_B-r+1)$.  \hfill $\Box$
\end{theorem}

One could instead ask for the converse: subspaces of Schmidt rank
$\leq r$. Note that geometrically this corresponds to a linear
subspace lying \emph{within} the determinantal variety
$\mathcal{D}_{r+1}(d_A,d_B)$. There is a simple construction, $S = R
\otimes \CC^{d_B}$, for any subspace $R \subseteq \CC^{d_A}$ of
dimension $r$, which achieves $\dim S = r d_B$. This is clearly tight
if $r=1$ or $r=d_A$. In fact, one can show that this construction is
optimal in general, which is immediate from the following theorem due
to Flanders~\cite{Flanders}:
\begin{theorem}[Flanders]
  Let $S$ be a subspace of the space of $d_A \times d_B$ matrices,
  where $d_A \leq d_B$. Let $r$ be the maximum rank of any element of
  $S$. Then $\dim S \leq r d_B$.  \hfill $\Box$
\end{theorem}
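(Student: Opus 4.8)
The plan is to prove Flanders' theorem by a purely linear-algebraic reduction followed by a dimension-counting trade-off; no algebraic geometry is needed here. Since $\dim S \le d_A d_B = r d_B$ is automatic when $r = d_A$, I assume $r < d_A$. Because $r$ is the maximal rank attained, pick $A \in S$ with $\rank A = r$. Left- and right-multiplication by invertible matrices carries $S$ to a subspace $PSQ$ of the same dimension all of whose elements have the same ranks, so I may assume $A = \bigl(\begin{smallmatrix} I_r & 0 \\ 0 & 0\end{smallmatrix}\bigr)$ and write every $B \in S$ in the corresponding $2\times 2$ block form, with $B_{11}$ of size $r\times r$, $B_{12}$ of size $r\times(d_B-r)$, $B_{21}$ of size $(d_A-r)\times r$, and $B_{22}$ of size $(d_A-r)\times(d_B-r)$.

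First I would extract the constraints imposed by the rank bound. For a scalar $\lambda$ with $\lambda I_r + B_{11}$ invertible, the matrix $\lambda A + B \in S$ has rank $\le r$, and the Schur-complement rank formula gives $\rank(\lambda A + B) = r + \rank\bigl(B_{22} - B_{21}(\lambda I_r + B_{11})^{-1} B_{12}\bigr)$. Hence $B_{22} = B_{21}(\lambda I_r + B_{11})^{-1} B_{12}$ for all generic $\lambda$; letting $\lambda\to\infty$ forces $B_{22}=0$, and then $B_{21}(\lambda I_r + B_{11})^{-1} B_{12} = 0$ identically in $\lambda$. The conclusion $B_{22}=0$ alone is far too weak, so the key step is to feed a second element into this relation: if $N \in S$ has only its $(2,1)$-block $N_{21}$ nonzero, then applying the relation to $B + N$ and subtracting the relation for $B$ gives $N_{21}(\lambda I_r + B_{11})^{-1} B_{12} = 0$; multiplying by $\lambda$ and sending $\lambda\to\infty$ yields the crucial bilinear identity $N_{21} B_{12} = 0$.

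Then I would set up the trade-off. Let $T = \{B \in S : B_{11}=0,\ B_{12}=0\}$ be the kernel of the projection $\rho\colon B \mapsto (B_{11}\mid B_{12})$ onto the top $r$ rows; its elements are determined by $B_{21}$, so $\dim T = \dim \pi_{21}(T)$. Put $W := \sum_{B\in S}\mathrm{Col}(B_{12}) \subseteq \CC^r$ and let $V \subseteq (\CC^r)^*$ be the span of the rows of the matrices $N_{21}$ with $N \in T$. The identity $N_{21}B_{12}=0$ says precisely that $V \subseteq W^{\perp}$, so $\dim V + \dim W \le r$. Writing $\dim S = \dim T + \dim \rho(S)$ and bounding each summand by its block constraints gives $\dim\rho(S) \le r^2 + (d_B-r)\dim W$ (the columns of every $B_{12}$ lie in $W$) and $\dim T \le (d_A-r)\dim V$ (the rows of every $N_{21}$ lie in $V$). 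Substituting $\dim V \le r - \dim W$ and using $d_A - r \ge 0$ collapses the $r^2$ and $(d_A-r)r$ terms to $r d_A$, leaving a nonnegative coefficient $(d_B - d_A)$ multiplying $\dim W \le r$, whence $\dim S \le r d_A + (d_B - d_A)\dim W \le r d_A + r(d_B - d_A) = r d_B$.

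The main obstacle is the second step: the naive consequence $B_{22}=0$ yields only the far weaker bound $d_A d_B - (d_A-r)(d_B-r)$, and one must genuinely exploit how the off-diagonal blocks of \emph{distinct} elements of $S$ interact in order to reach $N_{21}B_{12}=0$. Once that orthogonality is available, the inequality $\dim V + \dim W \le r$ does the real work and the final count is routine. I note that the hypothesis $d_A \le d_B$ enters exactly in making the coefficient $d_B - d_A$ nonnegative, so that the bound is maximised at $\dim W = r$ and gives the sharp value $r d_B$ realised by $S = R \otimes \CC^{d_B}$.
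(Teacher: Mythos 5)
The paper gives no proof of this theorem at all: it is imported as a black box from Flanders' paper~\cite{Flanders} (hence the bare $\Box$), and the paper's own machinery could not supply one, since the variety-dimension argument bounds subspaces that must \emph{avoid} the determinantal variety $\mathcal{D}_{r+1}(d_A,d_B)$, whereas here one must bound subspaces lying \emph{inside} it. So your argument stands or falls on its own, and having checked it step by step, it is correct. The normalization to $A = \bigl(\begin{smallmatrix} I_r & 0 \\ 0 & 0 \end{smallmatrix}\bigr)$ is legitimate because $X \mapsto PXQ$ preserves dimension and all ranks; the Schur-complement identity together with $\rank(\lambda A + B) \le r$ does force $B_{22} = B_{21}(\lambda I_r + B_{11})^{-1}B_{12}$ for all but finitely many $\lambda$, whence $B_{22}=0$ and the rational function $B_{21}(\lambda I_r + B_{11})^{-1}B_{12}$ vanishes identically; the polarization with $N \in T$ correctly yields $N_{21}B_{12}=0$; and the final count $\dim S \le (d_A-r)\dim V + r^2 + (d_B-r)\dim W \le r d_A + (d_B-d_A)\dim W \le r d_B$ is sound arithmetic, with $d_A \le d_B$ and $\dim W \le r$ entering exactly where you say. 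Your route is essentially the classical one going back to Flanders himself --- normalize a maximal-rank element, extract block constraints from the pencil $\lambda A + B$, then count --- and its virtue here is that it is entirely elementary, in contrast to the projective intersection theory the paper uses for its main result. Two minor polishing remarks: the limit arguments can be replaced by expanding $(\lambda I_r + B_{11})^{-1}$ as a series in $1/\lambda$ and matching coefficients (this gives $N_{21}B_{11}^k B_{12}=0$ for all $k \ge 0$, of which you only need $k=0$), which removes all analysis and shows the theorem holds over any infinite field, not just $\CC$; and the degenerate case $r=0$ (where $S=\{0\}$) should be dispatched before choosing an element $A$ of rank $r$.
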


Another interesting variant is to ask what are the subspaces which
have Schmidt rank \emph{exactly} $r$. For example, our construction
above yields subspaces of dimension $d_B-d_A+1$ in $\CC^{d_A} \otimes
\CC^{d_B}$ of Schmidt rank equal to $d_A$. A different example is
given by the three-dimensional completely antisymmetric subspace of
$\CC^3\otimes \CC^3$, which has Schmidt rank equal to $2$. This
question has been the subject of a remarkably long-running study in
the linear algebra literature and, as far as we are aware, the general
case remains unsolved. The best existing results are summarised in the
following theorem, which can be found in~\cite{westwick}:
\begin{theorem}[Westwick]
  Let $S$ be the largest subspace of the space of $d_A \times d_B$
  matrices, with $d_B \geq d_A$, such that the rank of every non-zero
  element of $S$ is $r$. Then in general,
  \begin{equation*}
      d_B-r+1 \leq \dim S \leq d_A + d_B -2r + 1.
  \end{equation*}
  Furthermore, if $d_B - r + 1$ does not divide $(d_A-1)!/(r-1)!$,
  then $\dim S = d_B-r+1$. If $d_A=r+1, d_B=2r-1$, then $\dim S = r +
  1$.  \hfill $\Box$
\end{theorem}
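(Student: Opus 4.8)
The plan is to handle the three assertions --- the two universal bounds and the refined equality conditions --- by different means. The lower bound $\dim S \ge d_B-r+1$ I would prove by explicit construction, adapting the banded family from \cref{prop:dimension_lower_bound} to general $r$. Within the top $r$ rows of a $d_A\times d_B$ matrix place the matrices $E_\ell$ ($0\le\ell\le d_B-r$), where $E_\ell$ has a $1$ in each position $(i,i+\ell)$ for $i=1,\dots,r$ and zeros elsewhere. A nonzero combination $\sum_\ell a_\ell E_\ell$ realises multiplication by the polynomial $p(t)=\sum_\ell a_\ell t^\ell\not\equiv 0$ on polynomials of degree $<r$; since $\CC[t]$ has no zero divisors this map is injective, so the matrix has full row rank $r$. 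These $d_B-r+1$ matrices are independent and span a space of constant rank exactly $r$.

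For the upper bound $\dim S\le d_A+d_B-2r+1$ I would pass to projective space and use vector bundles, following the standard treatment of spaces of matrices of constant rank. Set $X=\PP(S)\cong\PP^{n}$ with $n=\dim S-1$, and let $h$ denote the hyperplane class, so $H^\ast(X)=\CC[h]/(h^{n+1})$. Composing the tautological inclusion $\mathcal O_X(-1)\hookrightarrow S\otimes\mathcal O_X$ with $S\subseteq\operatorname{Hom}(\CC^{d_B},\CC^{d_A})$ and twisting by $\mathcal O_X(1)$ gives a bundle map of constant rank $r$, hence exact sequences
\begin{equation*}
0\to\mathcal K\to\CC^{d_B}\otimes\mathcal O_X\to\mathcal I\to 0,\qquad 0\to\mathcal I\to\CC^{d_A}\otimes\mathcal O_X(1)\to\mathcal Q\to 0,
\end{equation*}
with $\operatorname{rk}\mathcal K=d_B-r$, $\operatorname{rk}\mathcal I=r$, $\operatorname{rk}\mathcal Q=d_A-r$. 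The Whitney formula then yields $c(\mathcal K)\,(1+h)^{d_A}=c(\mathcal Q)$ in $\CC[h]/(h^{n+1})$. Because Chern classes vanish above the rank, $c(\mathcal K)$ is a polynomial in $h$ of degree $\le d_B-r$ and $c(\mathcal Q)$ of degree $\le d_A-r$; equating the coefficients of $h^k$ for $d_A-r<k\le n$ (where $c(\mathcal Q)$ must vanish) gives a linear system in the $d_B-r$ coefficients of $c(\mathcal K)$ whose coefficient matrix is the Toeplitz matrix of binomials $\binom{d_A}{\,\cdot\,}$. Since a row of binomial coefficients is a P\'olya frequency sequence, the relevant square minor (with nonzero diagonal entry $\binom{d_A}{r}$) is nonzero, so the system becomes inconsistent as soon as the number of equations $n-(d_A-r)$ exceeds $d_B-r$; this forces $n\le d_A+d_B-2r$, i.e. $\dim S\le d_A+d_B-2r+1$.

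The delicate part is deciding, between these bounds, exactly when $\dim S$ can exceed the minimum $d_B-r+1$. Here I would analyse the same Chern relation at $n=d_B-r+1$: a space of dimension $d_B-r+2$ exists only if there is a polynomial $c(\mathcal I)$ of degree $\le r$ for which both $c(\mathcal K)=1/c(\mathcal I)$ and $c(\mathcal Q)=(1+h)^{d_A}/c(\mathcal I)$ have degrees $\le d_B-r$ and $\le d_A-r$ respectively, modulo $h^{n+1}$. Computing the determinant of the resulting system, one is led to the product of consecutive integers $(d_A-1)!/(r-1)!=r(r+1)\cdots(d_A-1)$, and solvability becomes an integrality condition equivalent to $d_B-r+1\mid(d_A-1)!/(r-1)!$; when this divisibility fails the minimum is rigid, $\dim S=d_B-r+1$. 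The remaining special case $d_A=r+1$, $d_B=2r-1$ (where the divisibility holds and $d_A+d_B-2r+1=r+1$) I would settle by exhibiting an explicit constant-rank space attaining the upper bound, realised as a bundle epimorphism on $\PP^{r}$.

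The main obstacle is this last step: translating the abstract solvability of the Chern-class system into the precise divisibility statement, and producing extremal examples when it is met. The two bounds are essentially parameter counting made rigorous by the positivity of binomial minors, but pinning down the borderline cases requires the full determinantal and number-theoretic analysis together with genuinely new constructions --- which is also why the general constant-rank problem remains open beyond the cases recorded here.
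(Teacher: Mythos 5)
A preliminary remark: the paper does not prove this theorem at all --- it is quoted from the linear-algebra literature with the citation \cite{westwick}, the $\Box$ marking it as stated without proof --- so your attempt can only be compared with the statement itself and with Westwick's published arguments. Against that benchmark, your proposal gets the two displayed inequalities essentially right, and by essentially the known route. The lower-bound construction is complete and correct: a nonzero combination $\sum_\ell a_\ell E_\ell$ has as its $r$ nonzero rows the coefficient vectors of $t^{i-1}p(t)$, $i=1,\dots,r$, which are linearly independent because $\CC[t]$ has no zero divisors, so the rank is exactly $r$. The Chern-class argument for the upper bound is likewise the accepted proof (Westwick's topological argument in bundle language): constancy of the rank over $\PP(S)$ is exactly what makes kernel, image and cokernel into vector bundles, Whitney gives $c(\mathcal{K})(1+h)^{d_A}=c(\mathcal{Q})$ modulo $h^{n+1}$, and vanishing of Chern classes above the rank yields the Toeplitz system. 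One detail needs repair: total nonnegativity (the P\'olya-frequency property of the binomial sequence) only gives that the square minor is $\geq 0$, and an overdetermined system is not automatically inconsistent. You need strict positivity of $\det\bigl[\tbinom{d_A}{d_A-r+1+i-l}\bigr]_{i,l=1}^{d_B-r+1}$; this does hold --- by dual Jacobi--Trudi it equals a Schur polynomial evaluated at $d_A$ ones, and the relevant partition has $d_A-r+1\le d_A$ parts --- so the bound $\dim S\le d_A+d_B-2r+1$ goes through once this is supplied.

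The genuine gap is exactly the part you flag at the end, and it is not a small one, because the two ``furthermore'' claims are the real content of the theorem beyond parameter counting. For the divisibility criterion your plan is to compute ``the determinant of the resulting system'', but at $n=d_B-r+1$ there is no determinant to exploit: the system has $d_B-d_A+1$ equations in the $d_B-r$ unknown coefficients of $c(\mathcal{K})$, so for $r\le d_A-2$ it is underdetermined and rational solutions always exist. The only obstruction is that Chern classes lie in $H^{*}(\PP^{n};\mathbb{Z})$, i.e.\ integrality, and converting that integrality into the precise arithmetic condition $d_B-r+1 \mid (d_A-1)!/(r-1)!$ is a genuine computation that the proposal gestures at (``one is led to'') but does not perform; this is precisely where Westwick's work lies. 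Likewise, the case $d_A=r+1$, $d_B=2r-1$ requires exhibiting an $(r+1)$-dimensional space of $(r+1)\times(2r-1)$ matrices of constant rank $r$; ``a bundle epimorphism on $\PP^{r}$'' names the desired object, not a construction, and these extremal examples are nontrivial. As it stands, then, you have a sound proof of the two bounds and an honest but unexecuted plan for the refined statements.
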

For sufficiently large $d_B$, it is of course impossible for $d_B-r+1$
to divide $(d_A-1)!/(r-1)!$, so the result for that case applies to
all sufficiently high-dimensional spaces. It appears to be impossible
to obtain this result via general dimensional arguments similar to
those used in this paper, which only reproduce the general upper
bound, $\dim S \leq (d_B-r+1)+(d_A-r)$.

\section{Discussion: applications and open questions}
We have determined the exact maximum dimension of subspaces of Schmidt
rank $\geq r$ in any bipartite quantum system. The upper bound on the
dimension is a generalisation of Parthasarathy's
argument~\cite{Parthasarathy} for a subspace avoiding the manifold of
product states, to the avoiding of a determinantal variety. Our
constructive lower bound seems to differ from Parthasarathy's (in the
case $r=2$), which is based on unextendible product vector systems.

Comparing these results, using the Schmidt measure,
with~\cite{random_subspace}, where the entropy measure of entanglement
is used, we have much tighter control on the entanglement in
subspaces. For example, in the cited paper, the random subspaces that
are constructed are necessarily highly entangled, simply because that
is the generic behaviour of random states. In contrast, here we find
the largest subspaces of bounded Schmidt rank over the \emph{whole
  range} of the entanglement measure, including values far away from
typical. This is most clearly demonstrated by considering subspaces
with Schmidt rank within a constant fraction of the maximum: $r\geq
kd_A$. For $k\geq 2^{-d_A/(d_B\ln 2)}$, using the results
of~\cite[Theorem~IV.1]{random_subspace} gives nothing better than the
trivial one-dimensional subspace, yet the exact result is
asymptotically of order $(1-k)^2 d_Ad_B$, i.e.\ within a constant
fraction of the entire space!

Our results can be used, in the spirit of~\cite{random_subspace}, to
construct highly mixed states of very large Schmidt
measure~\cite{TerhalHorodecki}: let $\rho$ be the normalised projector
onto a maximum dimensional subspace $S$ of Schmidt rank $\geq
r$. Then, since every pure state decomposition of $\rho$ can only
consist of state vectors from $S$, any entanglement measure built from
the Schmidt ranks of the constituent pure states has to be at least
$r$. For example, in arbitrarily large $d\times d$--systems, we thus
find for any $p$ states of rank $\geq p^2 d^2$ (i.e. entropy $2\log d
+ 2\log p$) and Schmidt measure $\geq (1-p)d$. The ideas and results
described in this paper also have applications to the study of
degradable quantum channels~\cite{toby}.

The present paper raises a number of questions: a first is about
multiparty generalisations, e.g.\ looking at subspaces with
constraints on the Schmidt rank across all bipartite cuts. The case of
completely entangled subspaces was solved in~\cite{Parthasarathy}.
Note that, \emph{were} random subspaces to saturate the bound in
\cref{prop:dimension_lower_bound}, a random subspace saturating the
tightest of the bipartite constraints would automatically satisfy all
the other constraints. The multipartite case would therefore reduce to
the bipartite case. However, this would contradict the known result
for completely entangled subspaces, once again underlining the fact
that progress in this type of problem requires going beyond the
typical case.

We could also attempt to make statements about more operationally
motivated entanglement measures, especially those based on von Neumann
or R\'enyi entropies, as for example
in~\cite{random_subspace}. However, the algebraic techniques used here
do not seem to give any insight into these problems.

\section{Acknowledgments}
The authors acknowledge support by the European Commission, project
``QAP'', and the U.K.\ EPSRC through postgraduate scholarships, the
``QIP IRC'' and an Advanced Research Fellowship.

We thank Aram Harrow, Richard Jozsa, Richard Low and Will Matthews for
various spirited discussions about the content of this paper;
especially the latter three for their solution of the case $r = d_A =
d_B$, which was the starting point of this work. We would also like to
thank Wee Kang Chua for drawing our attention to the question of
subspaces with upper-bounded Schmidt rank.

\bibliography{Schmidt_rank_subspace}

\end{document}